\newcommand{\noprint}[1]{}
\newcommand{\diag}{\mathop{\rm diag}\nolimits}
\newtheorem{theorem}{Theorem}
\newtheorem{lemma}{Lemma}
\newtheorem{corollary}{Corollary}
\theoremstyle{definition} \newtheorem{definition}{Definition}
\newtheorem{note}{Note}
\begin{document}

\par\noindent {\LARGE\bf
Equivalence of Diagonal Contractions to Generalized \\ IW-Contractions with Integer Exponents\par}

{\vspace{4mm}\par\noindent 
Dmytro R. POPOVYCH~$^\dag$ and Roman O. POPOVYCH~$^\ddag$
\par\vspace{2mm}\par}

{\vspace{2mm}\par\it
\noindent $^\dag$Faculty of Mechanics and Mathematics, National Taras Shevchenko University of Kyiv,\\
$\phantom{^{\dag}}$building 7, 2, Academician Glushkov prospectus, Kyiv, Ukraine, 03127
\par}

{\vspace{2mm}\par\it
\noindent $^\ddag$Fakult\"at f\"ur Mathematik, Universit\"at Wien, Nordbergstra{\ss}e 15, A-1090 Wien, Austria\\
$\phantom{^{\ddag}}$Institute of Mathematics of NAS of Ukraine, 3 Tereshchenkivska Str., Kyiv-4, Ukraine
 \par}

{\vspace{2mm}\par\noindent\rm E-mail: \it  deviuss@gmail.com, rop@imath.kiev.ua
 \par}

{\vspace{7mm}\par\noindent\hspace*{5mm}\parbox{150mm}{\small
We present a simple and rigorous proof of the claim by Weimar-Woods [{\it Rev. Math. Phys.}, 2000, {\bf 12}, 1505--1529]
that any diagonal contraction (e.g., a generalized In\"on\"u--Wigner contraction) is equivalent to
a generalized In\"on\"u--Wigner contraction with integer parameter powers.
}\par\vspace{5mm}}

\noprint{

\noindent 
{\bf Keywords:} \
contraction of Lie algebras, generalized IW-contraction, 
diagonal contraction,
degeneration of Lie algebras, one-parametric subgroup degeneration

\vspace{2mm}

\noindent 
{\bf MSC:} \ 17B81; 17B70
}

\section{Introduction}

Usual or generalized In\"on\"u--Wigner contractions (\emph{IW-contractions}) is a conventional way
for realizing contractions of Lie algebras.
The most known examples on contractions of Lie algebras arising in physics
(contractions from the Poincar\'e algebra to the Galilean one and
from the Heisenberg algebras to the Abelian ones of the same dimensions,
forming a symmetry background of limit processes from relativistic and quantum mechanics to classical mechanics)
are represented by usual IW-contractions.
The second of the above examples is a trivial contraction.
Any Lie algebra is contracted to the Abelian algebra of the same dimension via the IW-contraction
corresponding to the zero subalgebra.

The concept of contractions of Lie algebras introduced by Segal~\cite{Segal1951} in a heuristic way
became well known only after the invention of IW-contractions in~\cite{Inonu&Wigner1953, Inonu&Wigner1954}.
Saletan~\cite{Saletan1961} gave the first rigorous general definition of contractions and
investigated the whole class of one-parametric contractions
whose matrices are first-order polynomials with respect to contraction parameters. 
Later contractions of Lie algebras appeared in different areas of physics and mathematics,
e.g., in the study of representations, invariants and special functions.

The name ``generalized In\"on\"u--Wigner contraction'' were first used in~\cite{Hegerfeldt1967}
for so-called $p$-contractions by Doebner and Melsheimer \cite{Doebner&Melsheimer1967}.
Generalizing IW-contractions, they studied contractions 
whose matrices become diagonal after choosing appropriate bases of initial and contracted algebras, 
and diagonal elements being powers of the contraction parameter with real exponents.
In the algebraic literature, similar contractions with integer exponents are called \emph{one-parametric subgroup degenerations}
\cite{Burde1999,Burde2005,Burde&Steinhoff1999,Grunewald&Halloran1988}.
The notion of degenerations of Lie algebras extends the notion of contractions to the case of an arbitrary algebraically closed field
and is defined in terms of orbit closures with respect to the Zariski topology.
Note that in fact a one-parametric subgroup degeneration is induced by a one-parametric matrix group only
under an agreed choice of bases in the corresponding initial and contracted algebras.

All continuous contractions arising in the physical literature are generalized IW ones.
The question whether every contraction can be realized by a generalized IW-contraction was posed in~\cite{Weimar-Woods1991}.
Later it was conjectured that the answer is positive \cite{Weimar-Woods2000}.
The attempt of proving the conjecture in~\cite{Weimar-Woods2000} was not successful since,
as shown in~\cite{Nesterenko&Popovych2006}, the proof contains an unavoidable incorrectness at the initial step.
In fact, contrary instances on this conjecture, involving characteristically nilpotent Lie algebras,
were earlier constructed by Burde \cite{Burde1999,Burde2005} for all dimensions not less than seven
but they are not well-known among the physical community.
Since each proper generalized IW-contraction induces a proper grading on the contracted algebra and
each characteristically nilpotent Lie algebra possesses only nilpotent derivations and hence has no proper gradings
then \emph{any contraction to characteristically nilpotent Lie algebras obviously is inequivalent to a generalized IW-contraction}.
This fact cannot be used for lower dimensions in view of the absence of characteristically nilpotent Lie algebras in dimensions less than seven.

Examples of another kind on non-universality of generalized IW-contractions were recently presented in~\cite{Popovych&Popovych2009}.
There exist exactly two (resp.\ one) well-defined contractions of four-dimensional Lie algebras over the field of real (resp.\ complex) numbers,
which are inequivalent to generalized IW-contractions in spite of that the contracted algebras possess a wide range of proper gradings.
These examples are important since they establish new bounds for applicability of generalized IW-contractions, 
showing that a well-defined contraction may be inequivalent to a generalized IW-contraction 
even if the contracted algebra possesses proper gradings. 
The other contractions of four-dimensional Lie algebras were realized in \cite{Campoamor-Stursberg2008,Nesterenko2008,Nesterenko&Popovych2006}
by generalized IW-contractions involving nonnegative integer parameter exponents not greater than three,
and the upper bound proved to be exact \cite{Popovych&Popovych2009}.
Uniting these results gives the exhaustive description of generalized IW-contractions in dimension four.
It is expected that a similar answer may be true for dimensions five and six.
Therefore, generalized IW-contractions seem to be universal realizations only for contractions of Lie algebras of dimensions not greater than three
\cite{Nesterenko&Popovych2006}.

Considering different subclasses of Lie algebras closed with respect to contractions
or imposing restrictions on contraction matrices, we can pose the problem on partial universality of generalized IW-contractions
in specific subsets of contractions.
Thus, generalized IW-contractions of nilpotent algebras are studied in~\cite{Burde&Nesterenko&Popovych2009}.
Diagonal contractions arose in~\cite{Weimar-Woods2000} as an intermediate step in realizing general contractions
via generalized IW-contractions.
It was indirectly claimed as a part of a more general incorrect theorem on universality of generalized IW-contractions
that every diagonal contraction is equivalent to a generalized IW-contraction
and every generalized IW-contraction is equivalent to a generalized IW-contraction with integer exponents.
Although the claim is correct and important for the theory of contractions, 
the initial step of the proof presented in~\cite{Weimar-Woods2000} has to be corrected to avoid an essential inconvenience 
(especially for the case of complex Lie algebras)
partially induced by incorrectness of preliminary results and the general theorem. 

In this paper we rigorously prove two statements.
The proof that \emph{integer exponents are sufficient for generalized IW-contractions} is rather geometrical.
The second theorem which states \emph{equivalence of every diagonal contraction to
a generalized In\"on\"u--Wigner contraction involving integer powers of a parameter}
is proved in a more algebraic way.
The first statement obviously follows from the second one but it is of independent interest
and hence is separately formulated and proved.
In particular, it connects the investigations of generalized IW-contractions (admitting real exponents)
in the physical literature and one-parametric subgroup degenerations (whose parameter exponents are necessarily integer)
in the algebraic literature. 
The proofs are essentially simpler than those from~\cite{Weimar-Woods2000} and so much algorithmic 
that the described algorithms can be directly realized in symbolic calculation programs. 
This completely solves the problem on universality of generalized IW-contractions in the set of diagonal contractions.

\section{Contractions and generalized IW-contractions}
\label{SectionOnContractionsAndGenIWContractions}

The notion of contraction is defined for arbitrary algebraically closed fields in terms of orbit closures in the variety of Lie algebras
\cite{Burde1999,Burde2005,Burde&Steinhoff1999,Grunewald&Halloran1988,Lauret2003}.
Let $V$ be an $n$-dimensional vector space over an algebraically closed field~$\mathbb F$, $n<\infty$, and
$\mathcal L_n=\mathcal L_n(\mathbb F)$ denote the set of all possible Lie brackets on~$V$.
We identify $\mu\in\mathcal L_n$ with the corresponding Lie algebra $\mathfrak g=(V,\mu)$.
$\mathcal L_n$ is an algebraic subset of the variety $V^*\otimes V^*\otimes V$ of bilinear maps from $V\times V$ to $V$.
Indeed, under setting a basis $\{e_1,\dots,e_n\}$ of~$V$
there is the one-to-one correspondence between $\mathcal L_n$ and
\[
\mathcal C_n=\{(c_{ij}^k)\in\mathbb F^{n^3}\mid c_{ij}^k+c_{ji}^k=0,\,
c_{ij}^{i'\!}c_{i'\!k}^{k'\!}+c_{ki}^{i'\!}c_{i'\!j}^{k'\!}+c_{jk}^{i'\!}c_{i'\!i}^{k'\!}=0\},
\]
which is determined for any Lie bracket $\mu\in\mathcal L_n$ and
any structure constant tuple $(c_{ij}^k)\in\mathcal C_n$ by the formula $\mu(e_i,e_j)=c_{ij}^ke_k$.
Throughout the indices $i$, $j$, $k$, $i'$, $j'$ and $k'$ run from 1 to $n$
and the summation convention over repeated indices is used.
$\mathcal L_n$ is called the \emph{variety of $n$-dimensional Lie algebras (over the field~$\mathbb F$)}
or, more precisely, the variety of possible Lie brackets on~$V$.
The group~${\rm GL}(V)$ acts on $\mathcal L_n$ in the following way:
\[
(U\cdot\mu)(x,y)=U^{-1}\bigl(\mu(Ux,Uy)\bigr)\quad \forall U\in {\rm GL}(V),\forall \mu\in\mathcal L_n,\forall x,y\in V.
\]
(This is the right action conventional for the `physical' contraction theory.
In the algebraic literature, the left action defined by the formula $(U\cdot\mu)(x,y)=U\bigl(\mu(U^{-1}x,U^{-1}y)\bigr)$
is used that is not of fundamental importance.)
Denote the orbit of $\mu\in\mathcal L_n$ under the action of~${\rm GL}(V)$ by $\mathcal O(\mu)$ and
the closure of it with respect to the Zariski topology on~$\mathcal L_n$ by $\overline{\mathcal O(\mu)}$.

\begin{definition}\label{DefOfContractionsViaOrbitClosure}
The Lie algebra $\mathfrak g_0=(V,\mu_0)$ is called a \emph{contraction} (or \emph{degeneration})
of the Lie algebra~$\mathfrak g=(V,\mu)$ if $\mu_0\in\overline{\mathcal O(\mu)}$.
The contraction is \emph{proper} if $\mu_0\in\overline{\mathcal O(\mu)}\backslash\mathcal O(\mu)$.
The contraction is \emph{nontrivial} if $\mu_0\not\equiv0$.
\end{definition}

In the case $\mathbb F=\mathbb C$ the orbit closures with respect to the Zariski topology coincide
with the orbit closures with respect to the Euclidean topology and Definition~\ref{DefOfContractionsViaOrbitClosure} is
reduced to the usual definition of contractions which is also suitable for the case $\mathbb F=\mathbb R$.

\begin{definition}\label{DefOfContractions1}
Consider a parameterized family of the Lie algebra $\mathfrak g_\varepsilon=(V,\mu_\varepsilon)$ isomorphic to $\mathfrak g=(V,\mu)$.
The family of the new Lie brackets~$\mu_\varepsilon$, $\varepsilon \in (0,1]$, is defined via the Lie bracket~$\mu$
with a continuous function $U\colon (0,1]\to {\rm GL}(V)$ by the rule
$\mu_\varepsilon(x,y)=U_\varepsilon{}^{-1}\mu(U_\varepsilon x,U_\varepsilon y)$ $\forall \; x, y\in V$.
If for any $x, y\in V$ there exists the limit
\[
\lim\limits_{\varepsilon \to +0}\mu_\varepsilon(x,y)=
\lim\limits_{\varepsilon \to +0}U_\varepsilon{}^{-1}\mu(U_\varepsilon x,U_\varepsilon y)=:\mu_0(x,y)
\]
then $\mu_0$ is a well-defined Lie bracket.
The Lie algebra $\mathfrak g_0=(V,\mu_0)$ is called a \emph{one-parametric continuous contraction}
(or simply a \emph{contraction}) of the Lie algebra~$\mathfrak g$.
The procedure $\mathfrak g\to\mathfrak g_0$
providing $\mathfrak g_0$ from~$\mathfrak g$ is also called a \emph{contraction}.
\end{definition}

If a basis of~$V$ is fixed, the operator $U_\varepsilon$ is defined by the corresponding matrix $U_\varepsilon\in{\rm GL}_n(\mathbb F)$
and Definition~\ref{DefOfContractions1} can be reformulated in terms of structure constants.
Let ${c}^k_{ij}$ be the structure constants of the algebra~$\mathfrak g$ in the fixed basis~$\{e_1, \ldots, e_n\}$.
Then Definition~\ref{DefOfContractions1} is equivalent to that the limit
\[\lim\limits_{\varepsilon\to+0}(U_\varepsilon)_{i'}^i(U_\varepsilon)_{j'}^j(U_\varepsilon{}^{-1})_k^{k'}c^{k}_{ij}=:c^{k'}_{0,i'\!j'}\]
exists for all values of $i'$, $j'$ and $k'$ and, therefore,
$c^{k'}_{0,i'\!j'}$ are components of the well-defined structure constant tensor of a Lie algebra~$\mathfrak g_0$.
The parameter $\varepsilon$ and the matrix-function $U_\varepsilon$ are called a \emph{contraction parameter} and a \emph{contraction matrix},
respectively.

The contraction $\mathfrak g\to\mathfrak g_0$ is called
\emph{trivial} if ${\mathfrak g}_0$ is Abelian and \emph{improper} if ${\mathfrak g}_0$ is isomorphic to ${\mathfrak g}$.

\begin{definition}\label{equivalent contr}
The contractions $\mathfrak g\to\mathfrak g_0$ and $\tilde{\mathfrak g}\to\tilde{\mathfrak g}_0$ are called \emph{(weakly) equivalent}
if the algebras $\tilde{\mathfrak g}$ and $\tilde{\mathfrak g}_0$ are isomorphic to $\mathfrak g$ and $\mathfrak g_0$, respectively.
\end{definition}

Using the weak equivalence concentrates one's attention on existence and results of contractions and
neglects differences in ways of contractions.
To take into account such ways, we can introduce different notions of stronger equivalence~\cite{Nesterenko&Popovych2006}.

The following useful lemma is obvious. 

\begin{lemma}
If the matrix~$U_\varepsilon$ of a contraction $\mathfrak g\to\mathfrak g_0$ can be 
represented in the form $U_\varepsilon=\hat U_\varepsilon\check U_\varepsilon$, where 
$\hat U$ and $\check U$ are continuous functions from $(0,1]$ to ${\rm GL}_n(\mathbb F)$ and 
$\exists\lim_{\varepsilon \to +0}\check U_\varepsilon=:\check U_0\in{\rm GL}_n(\mathbb F)$
then $\hat U_\varepsilon\check U_0$ also is a matrix of the contraction $\mathfrak g\to\mathfrak g_0$ 
and the matrix~$\hat U_\varepsilon$ leads to an equivalent contraction.
\end{lemma}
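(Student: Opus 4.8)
The plan is to exploit the factorization directly, absorbing the convergent factor $\check U_\varepsilon$ into a small perturbation of the identity and a fixed conjugation. First I would verify that $\hat U_\varepsilon\check U_0$ realizes exactly the same contraction $\mathfrak g\to\mathfrak g_0$. Setting $V_\varepsilon:=\hat U_\varepsilon\check U_0=U_\varepsilon\check U_\varepsilon{}^{-1}\check U_0=U_\varepsilon W_\varepsilon$ with $W_\varepsilon:=\check U_\varepsilon{}^{-1}\check U_0$, the hypothesis $\check U_\varepsilon\to\check U_0\in{\rm GL}_n(\mathbb F)$ gives $W_\varepsilon\to I$ as $\varepsilon\to+0$. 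The bracket associated with $V_\varepsilon$ then factors as
\[
\nu_\varepsilon(x,y)=V_\varepsilon{}^{-1}\mu(V_\varepsilon x,V_\varepsilon y)=W_\varepsilon{}^{-1}\mu_\varepsilon(W_\varepsilon x,W_\varepsilon y),
\]
where $\mu_\varepsilon$ is the family attached to the original matrix $U_\varepsilon$. Since $W_\varepsilon\to I$ while $\mu_\varepsilon\to\mu_0$ by assumption, passing to the limit yields $\nu_0=\mu_0$. Hence $V_\varepsilon=\hat U_\varepsilon\check U_0$ is indeed a contraction matrix for the same $\mathfrak g\to\mathfrak g_0$.

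For the second assertion I would analyse the bracket $\rho_\varepsilon$ induced by $\hat U_\varepsilon$ alone. Writing $\hat U_\varepsilon=V_\varepsilon\check U_0{}^{-1}$ and performing the same substitution one obtains
\[
\rho_\varepsilon(x,y)=\hat U_\varepsilon{}^{-1}\mu(\hat U_\varepsilon x,\hat U_\varepsilon y)=\check U_0\,\nu_\varepsilon(\check U_0{}^{-1}x,\check U_0{}^{-1}y).
\]
Letting $\varepsilon\to+0$ and using $\nu_0=\mu_0$ from the first part, the limit $\rho_0(x,y)=\check U_0\,\mu_0(\check U_0{}^{-1}x,\check U_0{}^{-1}y)$ exists; in the notation of the ${\rm GL}(V)$-action this reads $\rho_0=\check U_0{}^{-1}\cdot\mu_0$. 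Therefore $\mathfrak g_0':=(V,\rho_0)$ lies in the orbit $\mathcal O(\mu_0)$ and is isomorphic to $\mathfrak g_0$. As the source algebra~$\mathfrak g$ is unchanged, the contraction $\mathfrak g\to\mathfrak g_0'$ furnished by $\hat U_\varepsilon$ is weakly equivalent to $\mathfrak g\to\mathfrak g_0$ by Definition~\ref{equivalent contr}.

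The only point requiring attention is the interchange of the limit $\varepsilon\to+0$ with the perturbation $W_\varepsilon\to I$ and with the fixed conjugation by the constant matrix $\check U_0$ in each of the two displays. Both interchanges are immediate because $\check U_0$ is a constant nonsingular matrix and because matrix multiplication and inversion are continuous on ${\rm GL}_n(\mathbb F)$, so no genuine obstacle arises — consistent with the lemma being flagged as obvious. The essential structural content is simply that a convergent factor with nonsingular limit contributes only an overall isomorphism of the contracted algebra and may be freely shifted between the two matrix factors.
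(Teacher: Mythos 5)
Your argument is correct and is precisely the computation the authors have in mind: the paper states this lemma without proof (``the following useful lemma is obvious''), and your factorization $V_\varepsilon=U_\varepsilon W_\varepsilon$ with $W_\varepsilon=\check U_\varepsilon{}^{-1}\check U_0\to I$, followed by the observation that $\rho_0=\check U_0{}^{-1}\cdot\mu_0\in\mathcal O(\mu_0)$, is the natural way to make it explicit. The only step deserving the brief justification you already give is passing to the limit in $W_\varepsilon{}^{-1}\mu_\varepsilon(W_\varepsilon x,W_\varepsilon y)$, which is immediate from the convergence of the structure constants of $\mu_\varepsilon$ together with $W_\varepsilon\to I$ in finite dimensions.
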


Generalized In\"on\"u--Wigner contractions is defined as a specific way for realizations of general contractions.

\begin{definition}\label{DefOfGenIWContractions}
The contraction $\mathfrak g\to\mathfrak g_0$ (over $\mathbb C$ or $\mathbb R$)
is called \emph{a generalized In\"on\"u--Wigner contraction} if
its matrix $U_\varepsilon$ can be represented in the form
$U_\varepsilon=AW_\varepsilon P$, where $A$ and $P$ are constant nonsingular matrices
and $W_\varepsilon=\diag(\varepsilon^{\alpha_1},\dots,\varepsilon^{\alpha_n})$ for some $\alpha_1,\dots,\alpha_n\in\mathbb R$.
The tuple of exponents $(\alpha_1,\dots,\alpha_n)$ is called the \emph{signature} of the generalized IW-contraction $\mathfrak g\to\mathfrak g_0$.
\end{definition}

In fact, the signature of a generalized IW-contraction $\mathfrak C$ is defined up to a positive multiplier
since the reparametrization $\varepsilon=\tilde\varepsilon^\beta$, where $\beta>0$,
leads to a generalized IW-contraction strongly equivalent to $\mathfrak C$.

Due to the possibility of changing bases in the initial and contracted algebras, we can set $A$ and $P$ equal to the unit matrix.
This is appropriate for some theoretical considerations but not for working with specific Lie algebras.
For $U_\varepsilon=\diag(\varepsilon^{\alpha_1},\dots,\varepsilon^{\alpha_n})$
the structure constants of the resulting algebra $\mathfrak g_0$ are calculated by the formula
$c_{0,ij}^k=\lim_{\varepsilon\to +0}c_{ij}^k\,\varepsilon^{\alpha_i+\alpha_j-\alpha_k}$
with no summation with respect to the repeated indices.
Therefore, the constraints
$\alpha_i+\alpha_j\geqslant \alpha_k$ if $c_{ij}^k\ne 0$
are necessary and sufficient
for the existence of the well-defined generalized IW-contraction with the contraction matrix $U_\varepsilon$,
and $c_{0,ij}^k=c_{ij}^k$ if $\alpha_i+\alpha_j=\alpha_k$ and $c_{0,ij}^k=0$ otherwise.
This obviously implies that the conditions of existence of generalized IW-contractions and the structure of contracted algebras
can be reformulated in the basis-independent terms of gradings of contracted algebras associated with
filtrations on initial algebras~\cite{Grunewald&Halloran1988,Lyhmus1969}.
(Probably, this was a motivation for introducing and developing the purely algebraic notion of \emph{graded contractions}
\cite{Hrivnak&Novotny&Patera&Tolar2006,Montigny&Patera1991}.)
In particular, the contracted algebra~$\mathfrak g_0$ has to admit a derivation
whose matrix is diagonalizable to $\diag(\alpha_1,\dots,\alpha_n)$.

The following statement is known as a conjecture for a long time (see, e.g., \cite{Weimar-Woods2000}).

\begin{theorem}\label{TheoremOnGenIWContractions}
Any generalized IW-contraction is equivalent to
a generalized IW-contraction with an integer signature (and the same associated constant matrices).
\end{theorem}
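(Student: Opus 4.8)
The plan is to reduce the claim to a purely combinatorial–geometric statement about the tuple of exponents and then to solve that statement by approximating a real point of a rational subspace by rational, hence integer, points. First I would use the constancy of $A$ and $P$ to isolate the role of the signature. Writing the right action of ${\rm GL}(V)$ as in the excerpt, one has $(AW_\varepsilon P)\cdot\mu=P\cdot\bigl(W_\varepsilon\cdot(A\cdot\mu)\bigr)$, where $\tilde\mu:=A\cdot\mu$ is a fixed copy of $\mathfrak g$ and $P\cdot{}$ is a fixed linear, invertible map on structure tensors. Since $P$ is nonsingular and independent of $\varepsilon$, the map $P\cdot{}$ commutes with the limit $\varepsilon\to+0$; therefore the generalized IW-contraction $U_\varepsilon=AW_\varepsilon P$ is well defined if and only if the diagonal contraction $W_\varepsilon\cdot\tilde\mu$ is, and its contracted algebra is $P\cdot\tilde\mu_0$, where $\tilde\mu_0=\lim_{\varepsilon\to+0}W_\varepsilon\cdot\tilde\mu$. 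Thus it suffices to choose an integer signature that leaves $\tilde\mu_0$ unchanged, keeping $A$ and $P$ fixed throughout.

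By the diagonal formula recalled just before the theorem, applied to the structure constants $\tilde c^k_{ij}$ of $\tilde\mu$, the limit $\tilde\mu_0$ exists precisely when $\alpha_i+\alpha_j\geqslant\alpha_k$ for every triple $(i,j,k)$ with $\tilde c^k_{ij}\ne0$, and it is completely determined by which of these inequalities are equalities: $\tilde c^k_{0,ij}=\tilde c^k_{ij}$ if $\alpha_i+\alpha_j=\alpha_k$ and $\tilde c^k_{0,ij}=0$ otherwise. Hence an integer tuple $\beta=(\beta_1,\dots,\beta_n)$ yields the same well-defined contracted algebra as $\alpha$ as soon as it preserves the sign pattern of the linear forms $\ell_{ijk}(\gamma)=\gamma_i+\gamma_j-\gamma_k$ on the relevant triples; that is, $\ell_{ijk}(\beta)=0$ exactly when $\ell_{ijk}(\alpha)=0$ and $\ell_{ijk}(\beta)>0$ exactly when $\ell_{ijk}(\alpha)>0$. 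So the whole theorem reduces to finding such an integer $\beta$.

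To produce $\beta$, let $S_0$ and $S_+$ be the sets of relevant triples on which $\ell_{ijk}(\alpha)$ vanishes and is positive, respectively, and set $L=\{\gamma\in\mathbb R^n:\ell_{ijk}(\gamma)=0\ \forall\,(i,j,k)\in S_0\}$. The decisive point — and the step I expect to be the only real obstacle — is that the active equalities must be reproduced \emph{exactly}, whereas the strict inequalities need only be reproduced approximately. This is resolved by working inside $L$: since the forms $\ell_{ijk}$ have integer coefficients, $L$ is a rational subspace, so $L\cap\mathbb Q^n$ is dense in $L$; moreover the set $\{\gamma\in L:\ell_{ijk}(\gamma)>0\ \forall\,(i,j,k)\in S_+\}$ is relatively open in $L$ and contains $\alpha$. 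Choosing a rational point $\beta'$ in this open set keeps every equality of $S_0$ exact and every inequality of $S_+$ strict. Finally, clearing denominators, i.e.\ replacing $\beta'$ by $\beta=m\beta'$ for a suitable positive integer $m$, yields an integer tuple; because each $\ell_{ijk}$ is linear and homogeneous and $m>0$, the scaling preserves both $\ell_{ijk}(\beta)=0$ for $(i,j,k)\in S_0$ and $\ell_{ijk}(\beta)>0$ for $(i,j,k)\in S_+$. Then $U_\varepsilon=A\,\diag(\varepsilon^{\beta_1},\dots,\varepsilon^{\beta_n})\,P$ is a well-defined generalized IW-contraction with the same matrices $A$ and $P$ and the identical contracted algebra $\mu_0=P\cdot\tilde\mu_0$, which proves the equivalence.
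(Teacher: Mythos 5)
Your proposal is correct and follows essentially the same route as the paper's proof: both reduce the claim to finding an integer solution of the mixed system of integer-coefficient linear equations ($\alpha_i+\alpha_j=\alpha_k$ on the active triples) and strict inequalities, and both obtain it by observing that the equality locus is a rational subspace in which rational points are dense, that the strict inequalities cut out a relatively open set containing $\alpha$, and that a rational solution can be scaled to an integer one. The only cosmetic difference is that the paper realizes the density argument via explicit Gaussian elimination of a maximal set of variables, while you invoke it abstractly; your explicit justification that $P$ commutes with the limit is a welcome elaboration of a step the paper leaves implicit.
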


\begin{proof}
Let the contraction $\mathfrak g\to\mathfrak g_0$ have the matrix $U_\varepsilon=AW_\varepsilon P$,
where $A$ and $P$ are constant nonsingular matrices
and $W_\varepsilon=\diag(\varepsilon^{\alpha_1},\dots,\varepsilon^{\alpha_n})$
for some $\alpha=(\alpha_1,\dots,\alpha_n)\in\mathbb R^n$.
We introduce the notations
\[
\mathcal E=\{(i,j,k)\mid c_{ij}^k\ne 0,\,\alpha_i+\alpha_j=\alpha_k\},\quad
\mathcal N=\{(i,j,k)\mid c_{ij}^k\ne 0,\,\alpha_i+\alpha_j>\alpha_k\}.
\]
We can assume that $\mathcal N\ne\varnothing$ since otherwise the contraction $\mathfrak g\to\mathfrak g_0$ is improper 
and therefore, equivalent to a generalized IW-contraction with the zero signature. 
The system $S$ of the equations $x_i+x_j=x_k$, $(i,j,k)\in\mathcal E$, and
the inequalities $x_i+x_j>x_k$, $(i,j,k)\in\mathcal N$, for $x=(x_1,\dots,x_n)\in\mathbb R^n$
is compatible since it has the solution $x=\alpha$.
If $x$'s satisfy $S$ then $\lambda x$ is a solution of~$S$ for any positive real $\lambda$.
Therefore, the solution set of $S$ is a nonempty conus in~$\mathbb R^n$.
We express a maximal subset of $x$'s via the other $x$'s using the equations $x_i+x_j=x_k$, $(i,j,k)\in\mathcal E$.
Denote by $I$ (resp.~$\bar I$) the set of numbers of the expressed $x$'s (resp.\ unconstrained $x$'s);
$\bar I$ is complimentary to~$I$ in $\{1,\dots,n\}$.
The expressions for $x_i$, $i\in I$, are linear in $x_j$, $j\in\bar I$, and have rational coefficients.
After substituting these expressions into the inequalities $x_i+x_j>x_k$, $(i,j,k)\in\mathcal N$,
we obtained a system~$S'$ of strict inequalities for $x_j$, $j\in\bar I$,
which defines an open nonempty conus in $\mathbb R^{|\bar I|}$.
The conus necessarily contains points with rational coordinates.
This means that the system~$S$ possesses rational solutions. Therefore, it also has integer solutions.
Any solution of~$S$ is the signature of a well-defined generalized IW-contraction $\mathfrak g\to\mathfrak g_0$
with the same associated constant matrices~$A$ and~$P$.
\end{proof}

Theorem~\ref{TheoremOnGenIWContractions} is a consequence of a more general statement on diagonal contractions,
discussed in the next section.

\begin{note}
In fact, the proof of Theorem~\ref{TheoremOnGenIWContractions} gives a \emph{constructive way} for finding 
an integer signature via solving the system~$S$, e.g., by the Gaussian elimination \cite{Solodovnikov1977}. 
(See also \cite{Chernikov1968} for different methods of solving linear systems of inequalities.)
At first the system~$S$ is reduced to the system~$S'$ by the Gaussian elimination of $x_i$, $i\in I$, 
due to the equations $x_i+x_j=x_k$, $(i,j,k)\in\mathcal E$. 
Then the Gaussian elimination is applied to the (compatible) system~$S'$ of strict inequalities. 
After the final step of the elimination we take rational (e.g., zero) values for the residual components of~$x$ 
and go back with the elimination conditions step-by-step, choosing rational values for the steps 
when the corresponding components of~$x$ are constrained by inequalities.  

The similar remark is true for the proof of Theorem~\ref{TheoremOnDiagonalContractions}. 
\end{note}

The notion of sequential contractions is introduced similarly to continuous contractions~\cite{Segal1951,Weimar-Woods2000}.
Namely, consider a sequence of the Lie algebra $\mathfrak g_p=(V,\mu_p)$, $p\in\mathbb N$, isomorphic to $\mathfrak g=(V,\mu)$.
The sequence of the new Lie brackets~$\{\mu_p,\,p\in\mathbb N\}$, is defined via the Lie bracket~$\mu$
with a sequence $\{U_p,\,p\in\mathbb N\}\subset{\rm GL}(V)$ by the rule
$\mu_p(x,y)=U_p{}^{-1}\mu(U_p x,U_p y)$ $\forall \; x, y\in V$.
If for any $x, y\in V$ there exists the limit
\[
\lim\limits_{p \to\infty}\mu_p(x,y)=
\lim\limits_{p \to\infty}U_p{}^{-1}\mu(U_p x,U_p y)=:\mu_0(x,y)
\]
then $\mu_0$ is a well-defined Lie bracket.
The Lie algebra $\mathfrak g_0=(V,\mu_0)$ is called a \emph{sequential contraction} of the Lie algebra~${\mathfrak g}$.

Any continuous contraction from $\mathfrak g$ to $\mathfrak g_0$ gives an infinite family of matrix sequences resulting in 
sequential contractions from $\mathfrak g$ to $\mathfrak g_0$.
More precisely, if $U_\varepsilon$ is the matrix of the continuous contraction and
the sequence $\{\varepsilon_p,\, p\in\mathbb N\}$ satisfies the conditions $\varepsilon_p\in(0,1]$, $\varepsilon_p\to+0$, $p\to\infty$,
then $\{U_{\varepsilon_p},\, p\in\mathbb N\}$ is a matrix sequence generating a sequential contraction from $\mathfrak g$ to $\mathfrak g_0$.

Conversely, if a sequence $\{U_p,\,p\in\mathbb N\}\subset{\rm GL}(V)$ defines a sequential contraction from $\mathfrak g$ to $\mathfrak g_0$ 
(and the sign of $\det U_p$ is the same for all $p\in\mathbb N$ if $\mathbb F=\mathbb R$)
then there exists a one-parametric continuous contraction from $\mathfrak g$ to $\mathfrak g_0$ with 
the associated continuous function $\tilde U\colon (0,1]\to {\rm GL}(V)$ such that $\tilde U_{1/p}=U_p$ for any $p\in\mathbb N$. 
The simple proof of this fact involves logarithms and exponents of matrices and, additionally,  
the polar decomposition in the real case. 

Definitions of special types of contractions, statements on properties and their proofs in the case of sequential contractions 
can be easily obtained via reformulation of those for the case of continuous contractions. 
It is enough to replace continuous parametrization by discrete one.
The replacement is invertible.

\section{Equivalence of diagonal contractions\\ to generalized IW-contractions}

There exists a class of contractions, which is wider than the class of generalized IW-contractions
and, at the same time, any contraction from this class is equivalent to a generalized IW-contraction involving only integer parameter powers.
Similar to generalized IW-contractions, this class is singled out by restrictions on contraction matrices
instead of restrictions on algebra structure.

\begin{definition}\label{DefOfDiogonalContractions}
The contraction $\mathfrak g\to\mathfrak g_0$ (over $\mathbb F=\mathbb C$ or $\mathbb R$) is called \emph{diagonal} if
its matrix $U_\varepsilon$ can be represented in the form $U_\varepsilon=AW_\varepsilon P$,
where $A$ and $P$ are constant nonsingular matrices
and $W_\varepsilon=\diag(f_1(\varepsilon),\dots,f_n(\varepsilon))$ for some
continuous functions $f_i\colon (0,1]\to\mathbb F\backslash\{0\}$.
\end{definition}

\begin{theorem}\label{TheoremOnDiagonalContractions}
Any diagonal contraction is equivalent to a generalized IW-contraction with an integer signature.
\end{theorem}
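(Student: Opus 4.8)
The plan is to split the task into matching the \emph{pattern} of the contracted structure constants by an integer signature and then matching their \emph{values} by a constant diagonal change of basis. First I would remove the constant factors: since $(AW_\varepsilon P)\cdot\mu=P\cdot\bigl(W_\varepsilon\cdot(A\cdot\mu)\bigr)$ and the outer action by the constant matrix $P$ is a fixed isomorphism, by Definition~\ref{equivalent contr} it suffices to treat the contraction with diagonal matrix $W_\varepsilon=\diag(f_1,\dots,f_n)$ from $A\cdot\mathfrak g\cong\mathfrak g$ to an algebra isomorphic to $\mathfrak g_0$. Writing $c_{ij}^k$ for the structure constants in this basis, existence of the contraction means precisely that $L_{ijk}:=\lim_{\varepsilon\to+0}f_i(\varepsilon)f_j(\varepsilon)/f_k(\varepsilon)$ exists finitely for every triple of $\mathcal S:=\{(i,j,k)\mid c_{ij}^k\ne0\}$, and then $c_{0,ij}^k=c_{ij}^kL_{ijk}$. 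I would partition $\mathcal S=\mathcal E\sqcup\mathcal N$ according to whether $L_{ijk}\ne0$ or $L_{ijk}=0$.

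I would then build the integer signature exactly as in the proof of Theorem~\ref{TheoremOnGenIWContractions}, applied to the system $S$ of equations $x_i+x_j=x_k$, $(i,j,k)\in\mathcal E$, and strict inequalities $x_i+x_j>x_k$, $(i,j,k)\in\mathcal N$: once $S$ is known to be compatible, the cone argument there furnishes an integer solution $\alpha$, and the generalized IW-contraction of $A\cdot\mathfrak g$ with signature $\alpha$ has contracted structure constants $c_{ij}^k$ on $\mathcal E$ and $0$ elsewhere. The compatibility of $S$ is the first genuine use of the hypotheses, and it cannot be obtained by the naive device of declaring $\alpha_i$ to be the ``order'' of $f_i$: an arbitrary continuous $f_i$ need have no well-defined order, and only the \emph{combinations} indexed by $\mathcal S$ are controlled. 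Instead, putting $h_i(\varepsilon)=\ln|f_i(\varepsilon)|$, I have $h_i+h_j-h_k\to\ln|L_{ijk}|$ (finite) on $\mathcal E$ and $h_i+h_j-h_k\to-\infty$ on $\mathcal N$. Were $S$ incompatible, a theorem of the alternative (Motzkin's transposition theorem) would produce coefficients $p_{ijk}\ge0$ on $\mathcal N$, not all zero, and free $q_{ijk}$ on $\mathcal E$ with $\sum_{\mathcal N}p_{ijk}v_{ijk}+\sum_{\mathcal E}q_{ijk}v_{ijk}=0$, where $v_{ijk}\in\mathbb Z^n$ is the vector determined by $\langle v_{ijk},z\rangle=z_i+z_j-z_k$. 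Pairing this identity with $h=(h_1,\dots,h_n)$ yields a function that is identically $0$, whereas its limit as $\varepsilon\to+0$ would equal $-\infty$ (from $\mathcal N$) plus a finite number (from $\mathcal E$); this contradiction proves compatibility.

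It remains to match values, i.e.\ to exhibit an isomorphism between the IW-contracted algebra (structure constants $c_{ij}^k$ on $\mathcal E$) and $\mathfrak g_0$ (structure constants $c_{ij}^kL_{ijk}$ on $\mathcal E$). A diagonal change of basis $e_i\mapsto d_ie_i$ rescales $c_{ij}^k$ to $(d_id_j/d_k)c_{ij}^k$, so I need $d_i\in\mathbb F\setminus\{0\}$ with $d_id_j/d_k=L_{ijk}$ on $\mathcal E$. The same limiting trick supplies the necessary consistency: for any integer relation $\sum_{\mathcal E}m_{ijk}v_{ijk}=0$ the product $\prod_{\mathcal E}(f_if_j/f_k)^{m_{ijk}}$ is identically $1$, hence $\prod_{\mathcal E}L_{ijk}^{m_{ijk}}=1$ in the limit. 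Thus $v_{ijk}\mapsto L_{ijk}$ is a well-defined homomorphism from the sublattice $\Lambda\subseteq\mathbb Z^n$ generated by the $v_{ijk}$, $(i,j,k)\in\mathcal E$, into $\mathbb F\setminus\{0\}$, and finding the $d_i$ amounts to extending it to all of $\mathbb Z^n$.

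This extension step is where the real and complex cases part, and I expect it to be the main obstacle. Over $\mathbb C$ the group $\mathbb C\setminus\{0\}$ is divisible, hence injective as a $\mathbb Z$-module, so the homomorphism extends at once and $d_i$ is the image of the $i$-th coordinate generator. Over $\mathbb R$ the group $\mathbb R\setminus\{0\}\cong\{\pm1\}\times\mathbb R_{>0}$ is not divisible, so I would treat magnitudes and signs separately: the magnitude system $\delta_i+\delta_j-\delta_k=\ln|L_{ijk}|$ is solvable over $\mathbb R$ because its only obstructions are the integer relations in $\Lambda$, already respected by the consistency condition; the signs I would fix by \emph{sampling} at one sufficiently small $\varepsilon_0$, setting $s_i=\operatorname{sign}f_i(\varepsilon_0)$ so that $s_is_js_k=\operatorname{sign}\bigl(f_if_j/f_k\bigr)\big|_{\varepsilon_0}=\operatorname{sign}L_{ijk}$ simultaneously for the finitely many triples of $\mathcal E$. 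Then $d_i=s_i\mathrm e^{\delta_i}$ satisfies $d_id_j/d_k=L_{ijk}$, and composing this diagonal isomorphism with the integer IW-contraction (and reinstating $A$ and $P$) gives a generalized IW-contraction with integer signature weakly equivalent to the prescribed diagonal one.
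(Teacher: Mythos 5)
Your proposal is correct, and it follows the same overall architecture as the paper's proof: reduce to $A=P=I$, characterize the contraction by the limits $F_{ij}^k=\lim f_if_j/f_k$, split the index triples into $\mathcal E$ (nonzero limits) and $\mathcal N$ (zero limits), find an integer solution of the linear system $S$ of equations and strict inequalities for the signature, and separately solve the multiplicative system $d_id_j/d_k=F_{ij}^k$ on $\mathcal E$ to get a constant diagonal correction $\diag(d_1,\dots,d_n)$, so that the final matrix is $A\diag(d)\diag(\varepsilon^{\alpha})P$. Where you genuinely diverge is in how you prove the two compatibility claims. For $S$, the paper runs an explicit Gaussian elimination over $\mathbb Z$ followed by a Voronoy-type theorem producing an identically vanishing nonnegative combination, and then translates the whole chain of additive operations into multiplicative operations on the limits to reach $1=0$; you instead apply Motzkin's transposition theorem once and pair the resulting certificate with $h=(\ln|f_1|,\dots,\ln|f_n|)$, getting a function that is identically zero yet tends to $-\infty$ --- shorter and cleaner, at the cost of citing a theorem of the alternative rather than giving the algorithm the paper is at pains to exhibit (its Note stresses constructiveness). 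For the multiplicative system $\Sigma$, the paper does a multiplicative Gaussian elimination with integer exponents via B\'ezout and rules out the inconsistent consequences $1=\prod(F_{ij}^k)^{m}$ and, in the real case, $Y^2=(\text{negative})$; you recast solvability as extending a character of the sublattice $\Lambda=\langle v_{ijk}\rangle\subseteq\mathbb Z^n$ to all of $\mathbb Z^n$, which is immediate over $\mathbb C$ by divisibility of $\mathbb C^{\times}$, and over $\mathbb R$ you correctly isolate the non-divisibility of $\{\pm1\}$ and resolve it by sampling the signs of the $f_i$ at one small $\varepsilon_0$ (in fact the $f_i$ are continuous and nonvanishing, so their signs are constant on $(0,1]$, which makes your sampling step even more robust). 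Both routes are sound; the paper's buys an explicit algorithm implementable in a computer algebra system, while yours buys brevity and a conceptually transparent explanation of why the real and complex cases differ.
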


\begin{proof}
Let the contraction $\mathfrak g\to\mathfrak g_0$ have the matrix $U_\varepsilon$ of the form from Definition~\ref{DefOfDiogonalContractions}.
Due to possibility of changing bases in the initial and contracted algebras, we can set $A$ and $P$ equal to the unit matrix.
If $U_\varepsilon=W_\varepsilon$, the structure constants of the contracted algebra $\mathfrak g_0$ are calculated by the formula
\[
c_{0,ij}^k=\lim_{\varepsilon\to +0}c_{ij}^k\,\frac{f_if_j}{f_k}
\]
with no summation with respect to the repeated indices.
Therefore, the condition
\[
\exists \lim_{\varepsilon\to +0}\frac{f_if_j}{f_k}=:F_{ij}^k\in \mathbb F
\quad\mbox{if}\quad c_{ij}^k\ne 0
\]
are necessary and sufficient
for the existence of the well-defined diagonal contraction with the contraction matrix $U_\varepsilon$,
and $c_{0,ij}^k=c_{ij}^kF_{ij}^k$ if $F_{ij}^k$ is defined and belongs to $\mathbb F\backslash\{0\}$
and $c_{0,ij}^k=0$ otherwise.

Introducing the notations
\[
\mathcal E=\{(i,j,k)\mid i<j,\ c_{ij}^k\ne 0,\ F_{ij}^k\ne0\},\quad
\mathcal N=\{(i,j,k)\mid i<j,\ c_{ij}^k\ne 0,\ F_{ij}^k=0\},
\]
we associate the set of the limits $F_{ij}^k$, $(i,j,k)\in\mathcal E\cup\mathcal N$,
with two systems%
\footnote{Here we also can assume that $\mathcal N\ne\varnothing$ since otherwise the contraction $\mathfrak g\to\mathfrak g_0$ is improper 
and therefore, equivalent to a generalized IW-contraction with the zero signature.}%
:

1) the system $\Sigma$ of the equations $y_iy_j/y_k=F_{ij}^k$, $(i,j,k)\in\mathcal E$,
for $y=(y_1,\dots,y_n)\in(\mathbb F\backslash\{0\})^n$ and

2) the mixed system $S$ of the equations $x_i+x_j-x_k=0$, $(i,j,k)\in\mathcal E$, and
the inequalities $x_i+x_j-x_k>0$, $(i,j,k)\in\mathcal N$, 
for $x=(x_1,\dots,x_n)\in\mathbb R^n$.

\noindent

We will prove using the rule of contraries that the existence of the nonzero limits $F_{ij}^k$ for $(i,j,k)\in\mathcal E$
(resp. these limits and, additionally, the zero limits for $(i,j,k)\in\mathcal N$)
implies the compatibility of the system~$\Sigma$ (resp. $S$).
The principal observation is that the solving operations with equations and inequalities of the systems
are equivalent to similar operations with the limits.

Suppose that the system~$\Sigma$ is incompatible.
Then $\mathcal E\ne\varnothing$. 
We use the multiplicative version of the Gaussian elimination, involving only integer powers. 
It is reduced to iterating the following procedure. 
Let $\Sigma_\nu$ denote the set $\{Y_s=G_s,\ s=1,\dots,\sigma\}$
of consequences of~$\Sigma$ obtained after the $\nu$th iteration, $\Sigma_0:=\Sigma$. 
Here $Y_s$ (resp.\ $G_s$) are products of integer powers of a selection of $y$'s (resp.\ $F$'s), 
the numbers $|\Sigma_\nu|$ and $|\Sigma|$ of equations of the systems $\Sigma_\nu$ and $\Sigma$ coincides, $|\Sigma_\nu|=|\Sigma|=:\sigma$.
We choose $i$ such that $y_i$ is in the system $\Sigma_\nu$ and 
denote by $\beta_s$ the degree of $Y_s$ with respect to~$y_i$.  
Let $\bar\beta=\mathop{\rm gcd}(\beta_1,\dots,\beta_\sigma)$ be the greatest common divisor of $\beta_1$, \dots, $\beta_\sigma$. 
According the generalized B\'ezout's identity, we have the representation $\bar\beta=\delta_s\beta_s$ with integer coefficients $\delta_s$. 
Consider the consequence $\bar Y=\bar G$ of $\Sigma_\nu$, where 
$\bar Y=Y_1^{\delta_1}\cdots Y_\sigma^{\delta_\sigma}$ and $\bar G=G_1^{\delta_1}\cdots G_\sigma^{\delta_\sigma}$. 
The degree of $\bar Y$ with respect to $y_i$ equals $\bar\beta$. 
The equations $Y_s\bar Y^{-\beta_s/\bar\beta}=G_s\bar G^{-\beta_s/\bar\beta}$, $s=1,\dots,\sigma$, form the system $\Sigma_{\nu+1}$. 
By the construction, the number of unknowns in $\Sigma_{\nu+1}$ is less by 1 than that in~$\Sigma_\nu$.
The incompatibility of the system~$\Sigma_\nu$ is equivalent to the incompatibility of the system~$\Sigma_{\nu+1}$.
In view of the incompatibility of $\Sigma$, after iterations we have to obtain
an inconsistent consequence of the form
\[
1=\prod_{(i,j,k)\in\mathcal E}(F_{ij}^k)^{m_{ij}^k}
\qquad\Biggl(\mbox{or}\quad 
Y^2=\prod_{(i,j,k)\in\mathcal E}(F_{ij}^k)^{m_{ij}^k}\Biggl),
\]
where the right-hand side does not equal the unity (resp.\ is negative), $m_{ij}^k\in\mathbb Z$, $(i,j,k)\in\mathcal E$, 
and $Y$ is a product of integer powers of $y$'s. 
Inconsistent consequences of the second form are related only to the case of real numbers. 
The same combination of operations is well defined for the limits with $(i,j,k)\in\mathcal E$
and, applied to them, results in the same (resp.\ similar) inconsistent equality for limits
that contradicts the existence of the diagonal contraction.

Suppose that the system $S$ is incompatible.
The subsystem of the equations $x_i+x_j-x_k=0$, $(i,j,k)\in\mathcal E$,
should have solutions. (At least, it has a zero solution.)
Applying the Gaussian elimination over $\mathbb Z$ to this subsystem,
we present it in the form
\[
a_ix_i=\sum_{j\in\bar I}b_i^jx_j, \quad i\in I,
\]
where $I\subset\{1,\dots,n\}$, $\bar I=\{1,\dots,n\}\backslash I$,
$a_i\in\mathbb N$, $b_i^j\in\mathbb Z$, $i\in I$ and $j\in\bar I$.
We eliminate $x_i$, $i\in I$, from the inequalities $x_{i'}+x_{j'}-x_{k'}>0$, $(i',j',k')\in\mathcal N$,
multiplying them by appropriate products of some of $a_i$, $i\in I$.
As a result, we obtain a system $S'$ of strict homogenous linear inequalities for $x_j$, $j\in\bar I$,
with integer coefficients.
Since the system $S$ is incompatible, there exists an identically vanishing linear combination with natural coefficients,
composed of left-hand sides of inequalities from $S'$.%
\footnote{This statement is a modification of the well-known Voronoy theorem~\cite{Voronoy1908} (see also \cite[p.~10]{Chernikov1968})
to the case of homogenous strict linear inequalities with integer (or rational) coefficients.}
This combination gives the inconsistent inequality $0>0$.

The above chain of additive operations with equations and inequalities of~$S$ is associated with
a chain of well-defined multiplicative operations with the limits $F_{ij}^k$, $(i,j,k)\in\mathcal E\cup\mathcal N$.
Under this association adding, subtracting and multiplying by integers are replaced by
multiplying, dividing and raising to the corresponding powers, respectively.
Only multiplying and raising to natural powers can be applied to
the zero limits $F_{ij}^k$, $(i,j,k)\in\mathcal N$, that agrees with restrictions on operations with inequalities.
The chain of operations with the limits leads to the inconsistent equality $1=0$
that contradicts the existence of the contraction.

Let $(\gamma_1,\dots,\gamma_n)$ and $(\alpha_1,\dots,\alpha_n)$ be solutions of the systems~$\Sigma$ and~$S$, respectively.
It is obvious that $\gamma_1\cdots\gamma_n\ne0$.
The system~$S'$ possesses rational solutions since the solution set of~$S'$ is open and nonempty.
This implies that the system~$S$ has rational solutions and hence has integer solutions,
i.e., we can assume then $\alpha_1,\dots,\alpha_n\in\mathbb Z$.
Then the matrix $\tilde U_\varepsilon=\tilde A\tilde W_\varepsilon P$,
where $\tilde A=A\diag(\gamma_1,\dots,\gamma_n)$ and
$\tilde W_\varepsilon=\diag(\varepsilon^{\alpha_1},\dots,\varepsilon^{\alpha_n})$,
generates a well-defined generalized IW-contraction from $\mathfrak g$ to $\mathfrak g_0$ with integer exponents.
\end{proof}

In other words, Theorem~\ref{TheoremOnDiagonalContractions} says that
generalized IW-contractions are universal in the class of diagonal contractions.

\begin{note}\looseness=1
Under constructing a generalized IW-contraction equivalent to a diagonal contraction in the way described in the proof,
the constant matrix factors of the associated contraction matrix are in fact known 
and coincide, up to a multiplier, with the ones of the diagonal contraction. 
The multiplier is a constant diagonal matrix whose diagonal entries give a solution of the system~$\Sigma$.
Only solving the system~$S$ of linear equations and inequalities with respect to parameter exponents is of a significant value.
We can choose an integer solution of~$S$ which is optimal in some sense, e.g.,
the maximum of the absolute values of whose components is minimal.
In general, the chosen solution may be non-optimal, in the same sense, in the entire set of integer signatures
of generalized IW-contractions from  $\mathfrak g$ to $\mathfrak g_0$.
To choose a totally optimal signature, for each tested tuple of exponents from a number of preliminary selected ones
we have either to find a solution of a cumbersome nonlinear system of algebraic equations
on coefficients of the constant matrix factors or to prove incompatibility of this system.
This is much more complicated problem than that discussed in Theorem~\ref{TheoremOnDiagonalContractions}.
\end{note}

\begin{corollary}
Any diagonal contraction whose matrix possesses a finite limit at $\varepsilon\to +0$
is equivalent to a generalized IW-contraction with nonnegative integer exponents.
\end{corollary}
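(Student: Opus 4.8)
The plan is to reuse the machinery of the proof of Theorem~\ref{TheoremOnDiagonalContractions} and merely refine the choice of exponents so that they become nonnegative. As there, by changing bases in the initial and contracted algebras we may assume $A=P=I$, so that $U_\varepsilon=W_\varepsilon=\diag(f_1,\dots,f_n)$. A constant matrix factor does not affect the existence of the limit, hence the hypothesis that $U_\varepsilon$ has a finite limit as $\varepsilon\to+0$ is equivalent to the existence of finite limits $L_i:=\lim_{\varepsilon\to+0}f_i(\varepsilon)\in\mathbb F$ for every~$i$. We keep the sets $\mathcal E$, $\mathcal N$ and the systems $\Sigma$, $S$ from that proof, which already furnish a solution $\gamma=(\gamma_1,\dots,\gamma_n)$ of~$\Sigma$ and an integer solution of~$S$. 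The only thing to be added is that the integer solution of~$S$ can be chosen with all components nonnegative.

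First I would augment $S$ to the system $S^+$ obtained by adjoining the nonnegativity constraints $x_i\geqslant0$, $i=1,\dots,n$. In the dictionary between additive operations on~$S$ and multiplicative operations on the limits used in the proof of Theorem~\ref{TheoremOnDiagonalContractions}, the unknown $x_i$ corresponds to the factor~$f_i$, an equation $x_i+x_j-x_k=0$ to a nonzero finite limit~$F_{ij}^k$, and a strict inequality $x_i+x_j-x_k>0$ to the vanishing limit of $f_if_j/f_k$. Within this correspondence the new non-strict constraint $x_i\geqslant0$ is precisely the translation of the statement ``$f_i$ has a finite limit~$L_i$'', i.e.\ of the extra hypothesis. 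Thus it is natural to expect $S^+$ to be compatible, and this is what I would prove.

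To establish compatibility I would argue by contradiction, exactly in the spirit of the proof of Theorem~\ref{TheoremOnDiagonalContractions}. Assuming $S^+$ incompatible, a theorem of the alternative for homogeneous systems mixing equations, strict and non-strict inequalities (Motzkin's transposition theorem, generalizing the Voronoy theorem used above) provides multipliers, which can be taken integer since the data is rational: integer multipliers on the equations of~$\mathcal E$ and nonnegative integer multipliers on the inequalities of~$\mathcal N$ and on the constraints $x_i\geqslant0$, whose linear combination vanishes identically in~$x$, with at least one multiplier of a strict inequality positive. (The latter is forced because $x=0$ already satisfies all equations and all nonnegativity constraints, so incompatibility can come only from the strict part.) Translating this combination multiplicatively, as in the proof of Theorem~\ref{TheoremOnDiagonalContractions}, yields a product of integer powers of the~$f_i$ in which the net exponent of every~$f_i$ vanishes, hence a function identically equal to~$1$. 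Passing to the limit $\varepsilon\to+0$, the factors coming from~$\mathcal E$ tend to the nonzero finite values~$F_{ij}^k$, the factors $f_i^{\nu_i}$ coming from the nonnegativity constraints tend to the finite values $L_i^{\nu_i}$ (this is exactly where the finite-limit hypothesis enters), while at least one factor coming from a strict inequality of~$\mathcal N$ tends to~$0$. Hence the limit of the product equals~$0$, contradicting its being identically~$1$.

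Finally, $S^+$ is homogeneous with rational coefficients, so its nonempty solution set is a relatively open subset of a rational polyhedral cone and therefore contains rational, and after scaling nonnegative integer, points. Choosing such a solution $\alpha\in(\mathbb Z_{\geqslant0})^n$ and forming $\tilde U_\varepsilon=\tilde A\tilde W_\varepsilon P$ with $\tilde A=A\diag(\gamma_1,\dots,\gamma_n)$ and $\tilde W_\varepsilon=\diag(\varepsilon^{\alpha_1},\dots,\varepsilon^{\alpha_n})$, exactly as in Theorem~\ref{TheoremOnDiagonalContractions}, yields the desired generalized IW-contraction with nonnegative integer exponents. The main obstacle I anticipate is the compatibility step: selecting the correct theorem of the alternative for the mixed strict/non-strict system and verifying that the infeasibility certificate necessarily activates a strict inequality, since this is what forces a genuine factor tending to~$0$ and, together with the finiteness of all remaining factors guaranteed by the hypothesis, produces the contradiction.
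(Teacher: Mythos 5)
Your proposal is correct and follows the same strategy as the paper: augment the exponent system~$S$ with sign constraints on the $x_i$ reflecting the finiteness of the limits of the $f_i$, rerun the compatibility argument, and extract a nonnegative integer solution. Two points where you are actually more careful than the paper's one-line proof deserve mention. First, the paper literally says to attach the conditions $x_i>0$; your non-strict version $x_i\geqslant0$ is the right one, since when some $f_i$ tends to a \emph{nonzero} finite limit the equations of~$\mathcal E$ may force $x_i=0$ (e.g.\ $[e_1,e_2]=e_2$ with $W_\varepsilon=\diag(1,\varepsilon)$ gives $x_1=0$), so the strict version can be incompatible, and in any case only nonnegativity is claimed in the corollary. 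Second, adjoining non-strict inequalities means the infeasibility certificate must come from a theorem of the alternative for mixed systems (Motzkin) rather than the purely strict Voronoy-type statement invoked in the proof of Theorem~\ref{TheoremOnDiagonalContractions}; your observation that the certificate must assign a positive multiplier to at least one inequality of~$\mathcal N$ (because $x=0$ satisfies everything else) is exactly what makes the multiplicative translation produce a factor tending to~$0$ against the finite factors $L_i^{\nu_i}$ and the nonzero factors from~$\mathcal E$, yielding the contradiction $1=0$. The remainder (rationality of the solution cone, scaling to integers, reassembling $\tilde U_\varepsilon=\tilde A\tilde W_\varepsilon P$) matches the paper.
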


\begin{proof}
Since the functions $f_i$ possess finite limits at $\varepsilon\to +0$, 
we can attach the additional equations $x_i>0$ to the system~$S$ and 
prove using the same way as in the proof of Theorem~\ref{TheoremOnDiagonalContractions} that the extended system has integer solutions.  
\end{proof}

\begin{note}
Other additional restrictions on exponents of generalized IW-contractions which are equivalent to diagonal contractions 
with certain properties can be set in a similar way.
In particular, it obviously follows from the proof of Theorem~\ref{TheoremOnDiagonalContractions} that 
for any fixed $j$ the $j$th exponent can be chosen nonnegative (negative) 
if there exists a finite (infinite) limit of $f_j$ at $\varepsilon\to +0$.
\end{note}

\begin{note}
The same theorem is true for diagonal sequential contractions, 
and its proof completely coincides with the proof of Theorem~\ref{TheoremOnDiagonalContractions}. 
\end{note}

\begin{note}
Theorem~\ref{TheoremOnDiagonalContractions} is obviously extended to the class of contractions wider than the class of diagonal contractions.
In particular, it implies that any contraction $\mathfrak g\to\mathfrak g_0$
whose matrix can be represented in the form $U_\varepsilon=\hat U_\varepsilon AW_\varepsilon \check U_\varepsilon$
is equivalent to a generalized IW-contraction with an integer signature.
Here 
$\hat U_\varepsilon$ is an automorphism matrix of the algebra $\mathfrak g$, 
$\check U_\varepsilon$ is a nonsingular matrix with the well-defined componentwise limit $\lim_{\varepsilon\to+0}\check U_\varepsilon=:P$, 
both the matrices $\hat U_\varepsilon$ and $\check U_\varepsilon$ are continuously parameterized by $\varepsilon\in(0,1]$,
$A$ and $P$ are constant nonsingular matrices 
and $W_\varepsilon=\diag(f_1(\varepsilon),\dots,f_n(\varepsilon))$ for some
continuous functions $f_i\colon (0,1]\to\mathbb F\backslash\{0\}$.
The problem on the widest class of parameterized matrices which are associated with contractions equivalent to generalized IW-contractions
is still open.
\end{note}

\noprint{

\section{Discussion}\label{SectionDiscussion}

As already remarked in the introduction, Theorems~\ref{TheoremOnGenIWContractions} and \ref{TheoremOnDiagonalContractions} 
were formulated in certain way in~\cite{Weimar-Woods2000}. 
Let us discuss the corresponding results of~\cite{Weimar-Woods2000} in detail and compare them with results of the present paper.

It was formulated in~\cite{Weimar-Woods2000} (see part~3 of the proof of~Theorem 3.1 presented therein)
but, unfortunately, the proof contains certain inconveniences.

Diagonal contractions arose in~\cite{Weimar-Woods2000} as an intermediate step in realizing general contractions
via generalized IW-contractions.
It was claimed as a part of a more general incorrect theorem on universality of generalized IW-contractions
that every diagonal contraction is equivalent to a generalized IW-contraction
and every generalized IW-contraction is equivalent to a generalized IW-contraction with integer exponents.
Although the claim is correct and important for the theory of contractions, 
the initial step of the proof presented in~\cite{Weimar-Woods2000} has to be corrected to avoid an essential inconvenience 
(especially for the case of complex Lie algebras)
partially induced by incorrectness of preliminary results and the general theorem. 
(See Section~\ref{SectionDiscussion} for discussion of this.)

We present a simple and rigorous proof of the statement by Weimar-Woods [{\it Rev. Math. Phys.}, 2000, {\bf 12}, 1505--1529]
that any diagonal contraction (e.g., a generalized In\"on\"u--Wigner contraction) is equivalent to
a generalized In\"on\"u--Wigner contraction with integer parameter powers.

The real case is simpler for your proof because it is unnecessary to use the compactness trick. For a one-parametric diagonal contraction, the diagonal elements are continuous nonvanishing functions and, therefore, they preserve their signs. Hence it is enough to separate the diagonal sign matrix from the diagonal part of the contraction matrix.

}

\subsection*{Acknowledgements}

The authors are grateful to Dietrich Burde, Maryna Nesterenko and Evelyn Weimar-Woods
for productive and helpful discussions.
The research of ROP was supported by the Austrian Science Fund (FWF), project P20632.

\end{document}